\setlist{  
  listparindent=\parindent,
  parsep=0pt,
}
\def\submitteddate{April 9, 2024}
\renewcommand{\baselinestretch}{1}
\newcommand{\NOP}[1]{}
\newcommand{\notimplies}{%
  \mathrel{{\ooalign{\hidewidth$\not\phantom{=}$\hidewidth\cr$\implies$}}}}
\newtheorem{theorem}              {Theorem}     [section]
\newtheorem{lemma}      [theorem] {Lemma}
\newtheorem{example}    [theorem] {Example}
\theoremstyle{definition}         
\newtheorem{definition} [theorem] {Definition}
\newcommand{\Hide}[1]{}
\DeclareFontFamily{U}{matha}{\hyphenchar\font45}
\DeclareFontShape{U}{matha}{m}{n}{
      <5> <6> <7> <8> <9> <10> gen * matha
      <10.95> matha10 <12> <14.4> <17.28> <20.74> <24.88> matha12
      }{}
\DeclareSymbolFont{matha}{U}{matha}{m}{n}
\DeclareMathSymbol{\notdivides}{3}{matha}{"1F}
\DeclareMathSymbol{\divides}{3}{matha}{"17}
\tikzset{
  iv/.style={
    draw,
    fill=orange!50,
%    circle,
    rectangle,
    minimum size=20pt,
    inner sep=0pt,
    text=black},
  ev/.style={
    draw,
    fill=green,
    rectangle,
    minimum size=20pt,
    inner sep=0pt,
    text=black}}
\begin{document}

\newcommand{\creationtime}{\today\ @ \currenttime}

\pagestyle{fancy}
\renewcommand{\headrulewidth}{0cm}
\chead{\footnotesize{Congero-Zeger}}
%\rhead{\footnotesize{\reviseddate}}
\rhead{\footnotesize{\submitteddate}}
%\rhead{\footnotesize{\today}}
\lhead{\footnotesize{\textit{A characterization of optimal prefix codes}}}
\cfoot{Page \arabic{page} of \pageref{LastPage}} % Need to include package ``lastpage''
%\rfoot{\footnotesize{Draft}}

% The line below adds a period after the Section number in section titles.
\makeatletter\renewcommand{\@seccntformat}[1]{\noindent {\csname the#1\endcsname}.\hspace{0.5em}}\makeatother

\renewcommand{\qedsymbol}{$\blacksquare$} % useful to control qed generated by \end{proof}

\newcommand{\Alphabet}{S}

\setcounter{page}{1}

\title{A Characterization of Optimal Prefix Codes
\thanks{
   \indent \textbf{S. Congero and K. Zeger} are with the 
  Department of Electrical and Computer Engineering, 
  University of California, San Diego, 
  La Jolla, CA 92093-0407 
  (scongero@ucsd.edu and ken@zeger.us).
}}

\author{Spencer Congero and Kenneth Zeger\\}
\date{
  \textit{
   SIAM Journal on Discrete Mathematics\\
  Submitted: \submitteddate\\
%  Created: \creationtime \\
%  \Huge{Draft}
%  Revised: \reviseddate
  %
  }
}

\maketitle
\begin{abstract}
A property of prefix codes called strong monotonicity is introduced,
and it is proven that for a given source,
a prefix code is optimal if and only if
it is complete and strongly monotone.
\end{abstract}

\section{Introduction}
\label{sec:introduction}

This paper concerns variable-length binary codes used to transmit or store
source symbols generated by a finite probability distribution.
Our main result is the following characterization of
binary prefix codes which achieve the minimal possible
average codeword length for
a given probability distribution over a finite symbol set.

\begin{theorem}
A prefix code is optimal if and only if
it is complete and strongly monotone.
\label{thm:strongly-monotone}
\end{theorem}

In what follows,
we first give historical background,
then define terminology,
and finally prove the main result.

% ---------------------------------------------------------------------------

Huffman codes were invented in 
1952~\cite{Huffman-1952} 
and today are widely used in many practical data compression applications,
such as for 
text, audio, image, and video coding.
They are known to be optimal in the sense that they achieve
the minimal possible expected codeword length among all prefix codes for
a given finite discrete random source~\cite{Cover-Thomas-book-2006}.

The main idea in the Huffman algorithm
is to construct a binary code tree from a finite source
by recursively merging two smallest-probability nodes
until only one node with probability $1$ remains.
The initial source probabilities correspond to leaf nodes in the tree,
and the binary paths from the tree's root to the leaves are the codewords.

For a given source,
Huffman codes and their corresponding code trees are not generally unique,
due to choices that arise during the tree construction
that can be decided arbitrarily:
(1) When two nodes are merged, the choice of which node becomes a left child and
which becomes a right child is arbitrary;
(2) If there are three or more smallest-probability nodes,
then which two of them to merge is arbitrary;
and
(3) If there is a unique smallest-probability node and
two or more second-smallest-probability nodes,
then which of these to merge with the smallest-probability node is arbitrary.
These latter two cases do not occur if 
probability ``ties'' are absent among tree nodes, 
which is almost surely true if the source itself is randomly chosen from
a continuous distribution.
After the tree is constructed 
all edges from parents to left children are labeled $0$
and 
all edges from parents to right children are labeled $1$,
or vice versa.

For many applications,
the average length of a prefix code is a primary concern,
while in some applications,
the specific binary codewords included 
in an optimal code may also be critical,
such as for reducing average resynchronization time
when channel errors can occur
(e.g., 
\cite{Cao-Yao-Chen-2007},
\cite{Escott-Perkins-1998},
\cite{Ferguson-Rabinowitz-1984},
\cite{Freiling-Jungreis-Theberge-Zeger},
\cite{Higgs-Perkins-Smith-2009},
\cite{Longo-Galasso-1982},
\cite{Zhou-Zhang-2002},
\cite{Zhou-Au-2010}
).

In addition to there being multiple Huffman codes for a given source,
there are also generally multiple non-Huffman codes which achieve the
same minimal average codeword length as Huffman codes.
These optimal non-Huffman codes can have different codewords and
their code trees may be topologically different.

Mathematical and algorithmic 
characterizations of Huffman codes,
and more generally optimal prefix codes, 
have been of great interest over the last 70 years.

Algorithmically, 
code equivalences have previously been described in terms of
various ``node swap''  transformations of the corresponding code trees.
``Same-parent'' node swaps consist of switching the two siblings 
(and the entire subtrees hanging from them)
of a parent node in the tree.
Similarly,
``same-row'' node swaps switch two nodes in the same tree row,
and ``same-probability'' node swaps
switch two tree nodes having the same probability.
It can be shown
that any two complete prefix codes that are length equivalent
can be obtained from each other by a series of same-row 
node swaps. 
Also, any two Huffman codes for the same source
can be obtained from each other by a series of same-parent and same-probability
node swaps~\cite{Longo-Galasso-1982}.
Additionally, every optimal code is 
length equivalent to some Huffman code~\cite{Manickman-2019}.
These two results of 
~\cite{Longo-Galasso-1982} and ~\cite{Manickman-2019}
together imply that
any two optimal codes 
can be obtained from each other by a series of same-row and same-probability
node swaps.

In 1978, Gallager~\cite{Gallager-IT-1978}
gave a useful non-algorithmic characterization of Huffman codes 
as precisely those prefix codes 
possessing a ``sibling property'',
which stipulates that a code is complete and
the nodes of its code tree
can be listed in order of non-increasing probability with
each node being adjacent in the list to its sibling.

For the broader class of optimal prefix codes,
no characterization analogous to the sibling property
has been previously given,
and it has remained an open question until present.
Only the sufficient condition given by the sibling property has been known.

One known necessary condition for a prefix code to be optimal is ``monotonicity'',
which states generally that code tree node probabilities decrease 
moving downward in the code tree.

An interpretation of monotonicity is 
as a constraint on the ordering of the collections
of each tree node's full set of leaf descendants,
where we note that each collection has a Kraft sum which is an integral power of $1/2$.
In this paper we provide a necessary and sufficient characterization
of optimal prefix codes
by introducing a new criterion called ``strong monotonicity''.
Strong monotonicity generalizes the above interpretation of monotonicity 
by removing the restriction
that each leaf collection comprises the descendants of a single tree node.

In particular we show that for a given source,
a prefix code is optimal
if and only if it is complete and strongly monotone.
We also note that 
Theorem~\ref{thm:strongly-monotone}
is exploited in 
another recent work~\cite{Congero-Zeger-Archiv-competitive}
to prove results about competitive optimality of Huffman codes.
% ---------------------------------------------------------------------------

Figure~\ref{fig:block-diagram} depicts the main result
relating to 
Theorem~\ref{thm:strongly-monotone},
along with some known prior art.
%
% ---------------------------------------------------------------------------
%
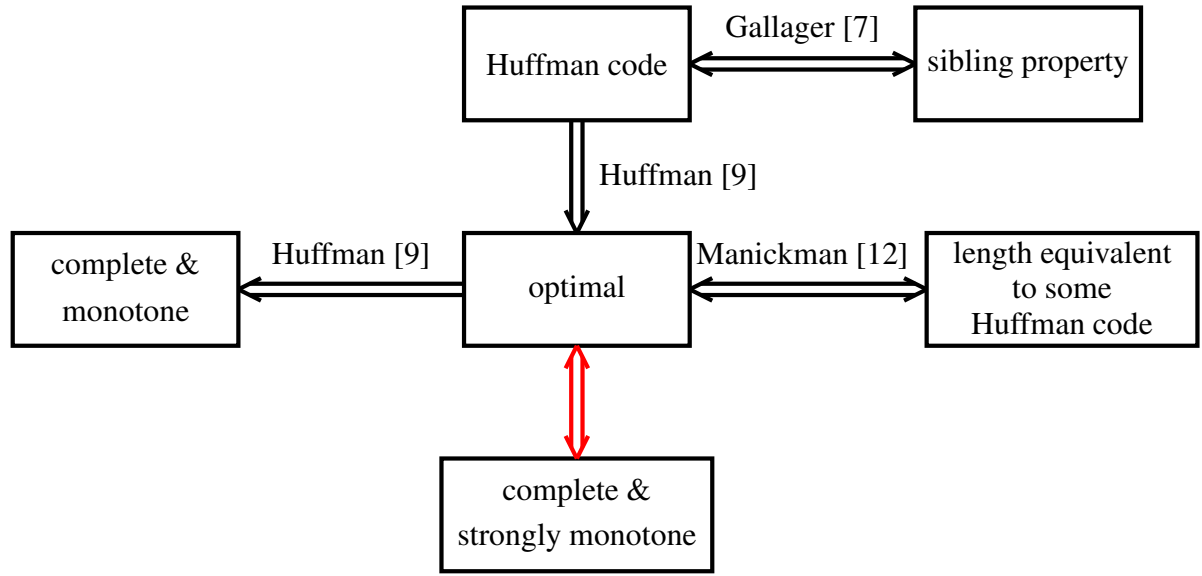
\begin{figure}[h]
\begin{center}
\begin{tikzpicture}[scale=1.5]
\draw[-, ultra thick]  (0,-2) -- (2,-2) -- (2,-1) -- (0,-1) -- cycle;     
\draw[-, ultra thick]  (4,0) -- (6,0) -- (6,1) -- (4,1) -- cycle;         
\draw[-, ultra thick]  (8,0) -- (10,0) -- (10,1) -- (8,1) -- cycle;       
\draw[-, ultra thick] (3.8,-4)--(6.2,-4)--(6.2,-3) --(3.8,-3) -- cycle;   
\draw[-, ultra thick] (4,-2)--(6,-2)--(6,-1) --(4,-1) -- cycle;           
\draw[-, ultra thick] (8.1,-2)--(10.5,-2)--(10.5,-1) --(8.1,-1) -- cycle; 

\draw[-, ultra thick]  (6.1, 0.55) -- (7.9, 0.55); 
\draw[-, ultra thick]  (6.1, 0.45) -- (7.9, 0.45); 
\draw[-, ultra thick]  (8,     .5) -- (7.8,  0.6); 
\draw[-, ultra thick]  (8,     .5) -- (7.8,  0.4); 
\draw[-, ultra thick]  (6.0, 0.50) -- (6.2, 0.40); 
\draw[-, ultra thick]  (6.0, 0.50) -- (6.2, 0.60);

\draw[-, ultra thick]  (4.95,    0) -- (4.95, -0.90); 
\draw[-, ultra thick]  (5.05,    0) -- (5.05, -0.90); 
\draw[-, ultra thick]  (5.0,  -1.0) -- (4.9,  -0.80); 
\draw[-, ultra thick]  (5.0,  -1.0) -- (5.1,  -0.80);

\draw[-, red, ultra thick]  (4.95,   -2.1) -- (4.95, -2.90); 
\draw[-, red, ultra thick]  (5.05,   -2.1) -- (5.05, -2.90); 
\draw[-, red, ultra thick]  (5.0,  -3.0) -- (4.9,  -2.80); 
\draw[-, red, ultra thick]  (5.0,  -3.0) -- (5.1,  -2.80); 
\draw[-, red, ultra thick]  (5.0,  -2.0) -- (4.9,  -2.2); 
\draw[-, red, ultra thick]  (5.0,  -2.0) -- (5.1,  -2.2);

\draw[-, ultra thick]  (2.1, -1.55) -- (4, -1.55); 
\draw[-, ultra thick]  (2.1, -1.45) -- (4, -1.45); 
\draw[-, ultra thick]  (2.0, -1.50) -- (2.2, -1.60); 
\draw[-, ultra thick]  (2.0, -1.50) -- (2.2, -1.40);

\draw[-, ultra thick]  (6.1, -1.55) -- (8, -1.55); 
\draw[-, ultra thick]  (6.1, -1.45) -- (8, -1.45); 
\draw[-, ultra thick]  (6,   -1.50) -- (6.2, -1.60); 
\draw[-, ultra thick]  (6,   -1.50) -- (6.2, -1.40); 
\draw[-, ultra thick]  (8.1, -1.50) -- (7.9, -1.60); 
\draw[-, ultra thick]  (8.1, -1.50) -- (7.9, -1.40); 

\node[black, scale=1] at (1,  -1.3) {complete \&};
\node[black, scale=1] at (1,  -1.7) {monotone};
\node[black, scale=1] at (5,  0.5) {Huffman code};
\node[black, scale=1] at (9,  0.5) {sibling property};
\node[black, scale=1] at (5, -3.3) {complete \&};
\node[black, scale=1] at (5, -3.7) {strongly monotone};
\node[black, scale=1] at (5, -1.5) {optimal};
\node[black, scale=1] at (9.3, -1.2) {length equivalent};
\node[black, scale=1] at (9.3, -1.5) {to some};
\node[black, scale=1] at (9.3, -1.8) {Huffman code};
\node[black, scale=1] at (3,  -1.2) {Huffman\cite{Huffman-1952}};
\node[black, scale=1] at (5.9, -0.5) {Huffman\cite{Huffman-1952}};
\node[black, scale=1] at (7,  0.8) {Gallager\cite{Gallager-IT-1978}};
\node[black, scale=1] at (7,  -1.2) {Manickman\cite{Manickman-2019}};
\end{tikzpicture}
\end{center}
\caption{
Logical implications of prefix code properties for a given source.
The red arrows indicate new results presented in this paper.
}
\label{fig:block-diagram}
\end{figure}

% ---------------------------------------------------------------------------

\section{Definitions}
\label{sec:defs}

An \textit{alphabet} is
a finite set $\Alphabet$,
and a \textit{source} with alphabet $\Alphabet$ 
is a random variable $X$ which,
for each $y\in\Alphabet$,
takes on the value $y$
with probability $P(y)$.
The probability of any subset $B\subseteq \Alphabet$ is denoted by
$P(B) = \displaystyle\sum_{y \in B} P(y)$.
A \textit{code} 
for source $X$ is a mapping 
$C:\Alphabet \longrightarrow \{0,1\}^*$
and,
for each $y\in \Alphabet$,
the binary string $C(y)$ 
is a \textit{codeword} of $C$.
A \textit{prefix code} is a code
where no codeword is a prefix of any other codeword.

A \textit{code tree} for a prefix code $C$ is a rooted binary tree 
whose leaves correspond to the codewords of $C$.
By convention,
each edge leading to a left child (respectively, right child) will be 
labeled $0$ (respectively, $1$).
The codeword associated with each leaf
is the binary word describing the path from the root to the leaf.
The \textit{length} of a code tree node is its path length from the root.
The $r$th \textit{row} of a code tree is the set of nodes whose length is $r$,
and we will view a code tree's root as being on the top of the tree
with the tree growing downward.
That is, row $r$ of a code tree is ``higher'' in the tree than row $r+1$.
If $x$ and $y$ are nodes in a code tree,
then $x$ is a \textit{descendant} of $y$
if there is a downward 
path of length zero or more from $y$ to $x$.
Two nodes in a tree are called \textit{siblings} if they have the same parent.
For any collection $A$ of nodes in a code tree,
let $P(A)$ denote the probability
of the set of all leaf descendants of $A$ in the tree.

A (binary) \textit{Huffman tree} is a code tree constructed from a source
by recursively merging two smallest-probability nodes%
\footnote{
For more details about Huffman codes, 
the reader is referred to the textbook
~\cite[Section 5.6]{Cover-Thomas-book-2006}.
}
until only
one node with probability $1$ remains.
The initial source probabilities correspond to leaf nodes in the tree.
A \textit{Huffman code} for a given source is a mapping of source symbols to
binary words by assigning the source symbol corresponding to each leaf
in the Huffman tree to the binary word describing the path from the root
to that leaf.

Given a source with alphabet $\Alphabet$ and a prefix code $C$,
for each $y\in \Alphabet$
the length of the binary codeword $C(y)$ is denoted $l_C(y)$.
Two codes $C_1$ and $C_2$ are \textit{length equivalent} if
$l_{C_1}(y) = l_{C_2}(y)$ for every source symbol $y\in\Alphabet$.
The \textit{average length} of a code $C$ for a source with alphabet $\Alphabet$ is
$\displaystyle\sum_{y\in\Alphabet} l_C(y) P(y)$.
A prefix code is \textit{optimal}
for a given source if no other prefix 
code achieves a smaller average codeword length for the source.
In particular,
Huffman codes are known to be optimal
(e.g., see~\cite{Cover-Thomas-book-2006}).

A code is \textit{complete} if every non-root node in its code tree has a sibling,
or, equivalently, if every node has either zero or two children.%
\footnote{
Our usage of the word ``complete'' has also been referred to in the literature as
``full'',
``extended'',
``saturated'',
``exhaustive'',
and
``maximal''.
}
A code $C$ for a given source
is \textit{monotone} if
for any two nodes in the code tree of $C$,
we have $P(u)\ge P(v)$
whenever $l_C(u) < l_C(v)$.
These two conditions are necessary for optimality,
as stated in the following lemma.

\begin{lemma}[{Huffman~\cite[p. 1099]{Huffman-1952}}]\ \\
For any source,
if a prefix code is optimal,
then it is complete and monotone.
\label{lem:minlen-implies-monotone}
\end{lemma}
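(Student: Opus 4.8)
The plan is to argue by contradiction using a subtree-swapping exchange argument. Suppose $C$ is optimal but not monotone, so there exist code tree nodes $u$ and $v$ with $l_C(u) < l_C(v)$ yet $P(u) < P(v)$. First I would dispose of the case where one of $u,v$ is a descendant of the other: since $l_C(u) < l_C(v)$ forces $u$ to be strictly higher in the tree, an ancestor–descendant relation would make $u$ an ancestor of $v$, whence every leaf descendant of $v$ is also a leaf descendant of $u$ and therefore $P(u) \ge P(v)$, contradicting $P(u) < P(v)$. Hence $u$ and $v$ must lie in disjoint subtrees, meaning neither is an ancestor of the other.

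Next I would form a new code tree $C'$ by exchanging the entire subtrees rooted at $u$ and $v$. Because neither node is an ancestor of the other, this operation produces a well-defined binary tree on the same set of leaves, hence a prefix code for the same source. Within each swapped subtree the relative depths of the leaves are unchanged; only the depth of the subtree's root shifts. Writing $d = l_C(v) - l_C(u) > 0$, every leaf formerly below $u$ has its length increased by exactly $d$, and every leaf formerly below $v$ has its length decreased by exactly $d$, while all remaining leaves keep their depths.

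Then I would compute the resulting change in average length. Summing the per-leaf length changes weighted by probability gives
\[
\sum_{y} \bigl(l_{C'}(y) - l_C(y)\bigr)\,P(y) \;=\; d\,P(u) - d\,P(v) \;=\; d\bigl(P(u) - P(v)\bigr) \;<\; 0,
\]
since $d > 0$ and $P(u) < P(v)$. Thus $C'$ has strictly smaller average length than $C$, contradicting the optimality of $C$. Therefore no such pair $u,v$ exists, and $C$ is monotone.

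The main obstacle I anticipate is being careful about the exchange operation: one must verify that swapping the subtrees of two unrelated nodes yields a legitimate code tree for the same source and that the length bookkeeping involves only the leaves strictly inside the two subtrees. Ruling out the ancestor–descendant case first is exactly what guarantees the swap is valid and that the probability masses $P(u)$ and $P(v)$ correspond to disjoint leaf sets, so that their contributions to the average-length change do not overlap or cancel.
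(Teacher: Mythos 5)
Your proof is correct. The paper does not prove this lemma itself---it imports it from Gallager~\cite[p.~670]{Gallager-IT-1978}---and your argument (first disposing of the ancestor--descendant case, then swapping the two incomparable subtrees and computing the change $d\bigl(P(u)-P(v)\bigr)<0$ in expected length) is precisely the standard exchange proof of that cited result, so it matches the intended approach rather than departing from it.
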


The \textit{Kraft sum}~\cite{Kraft-1949} of  a sequence of nonnegative integers
$l_1, \dots, l_k$ is 
$2^{-l_1} + \dots + 2^{-l_k}$.
We extend the definition of ``Kraft sum'' to sets of
source symbols with respect to a code as follows.
If $C$ is a prefix code for a source with alphabet $\Alphabet$,
and $U\subseteq \Alphabet$, 
then the Kraft sum of $U$ is
\begin{align*}
K_C(U) &= \sum_{x\in U} 2^{-l_C(x)} .
\end{align*}

The following lemma is a standard result in most information theory textbooks
and is used 
in the proofs of
Lemma~\ref{lem:complete_KS_1} and Theorem~\ref{thm:strongly-monotone}.
\begin{lemma}[Kraft Inequality converse 
              {\cite[Theorem 5.2.1]{Cover-Thomas-book-2006}}]\ \\ % p. 107
If a sequence $l_1, \dots, l_n$ of positive integers satisfies
$2^{-l_1} + \dots + 2^{-l_n} \le 1$,
then there exists a binary prefix code whose codeword lengths are
$l_1, \dots, l_n$.
\label{lem:Kraft-Inequality}
\end{lemma}

The following property was introduced by Gallager and appears in the following
Lemma~\ref{lem:SiblingProperty-iff-Huffman}.

\begin{definition} 
A binary code tree has the \textit{sibling property} if
it is complete
and if the nodes
can be listed in order of non-increasing probability with
each node being adjacent in the list to its sibling.
\end{definition}

The next lemma is very useful in proving results about Huffman codes,
and will be exploited in the proof of
Theorem~\ref{thm:strongly-monotone}.

\begin{lemma}[{Gallager~\cite[Theorem 1]{Gallager-IT-1978}}]\ \\
For any source,
a prefix code is a Huffman code
if and only if its code tree has the sibling property.
\label{lem:SiblingProperty-iff-Huffman}
\end{lemma}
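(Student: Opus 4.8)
The plan is to prove both implications by induction on the source size $n$, using the Huffman construction on one side and a node-collapsing operation on the other. Before the induction I would record two structural facts. First, since the Huffman algorithm creates every internal node by merging two existing nodes, every non-root node of a Huffman tree has a sibling, which supplies the completeness half of the sibling property for free. Second, I would observe that in any list witnessing the sibling property --- a listing $v_1, \dots, v_{2n-1}$ of all $2n-1$ tree nodes in non-increasing probability order with each non-root node adjacent to its sibling --- the final two entries $v_{2n-2}, v_{2n-1}$ must be a pair of sibling leaves of globally minimal probability. They are siblings because the last entry cannot be the root (the root has probability $1$, so it is $v_1$), and its sibling, being adjacent, must be $v_{2n-2}$. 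They are leaves because an internal node strictly exceeds the probability of each of its positive-probability children, which would then appear later in a non-increasing list, contradicting finality.

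For the direction that a Huffman code has the sibling property, I would let $a$ and $b$ be the two smallest-probability leaves that the Huffman algorithm merges first, forming a node $m$ with $P(m) = P(a) + P(b)$. Collapsing $a$ and $b$ into $m$ yields the Huffman tree $T'$ for the reduced source of size $n-1$, which by the inductive hypothesis admits a sibling-property list $L'$. Because $a$ and $b$ were chosen as the two smallest leaves of $T$, the smallest-probability leaf of $T'$ has probability at least $\max\{P(a),P(b)\}$, and since every node's probability is at least that of some leaf beneath it, every node of $T'$ --- including $m$ --- has probability at least $\max\{P(a),P(b)\}$. Hence appending $a, b$ to the end of $L'$ preserves the non-increasing order, and since $a, b$ are siblings while all earlier adjacencies are inherited from $L'$, the resulting list witnesses the sibling property for the full tree.

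For the converse, suppose the code tree $T$ has the sibling property with a witnessing list as above. By the structural observation, $v_{2n-2}$ and $v_{2n-1}$ are sibling leaves of minimal probability, so the Huffman algorithm is permitted to merge them in its first step. Deleting them and promoting their parent to a leaf produces a tree $T'$ on $n-1$ leaves; truncating the list to $v_1, \dots, v_{2n-3}$ (in which the parent already appears with the correct probability) still exhibits non-increasing order with siblings adjacent, so $T'$ has the sibling property and, by induction, is a Huffman tree for the reduced source. Prepending the first merge of $v_{2n-2}$ and $v_{2n-1}$ to the Huffman construction of $T'$ then realizes $T$ as a Huffman tree, completing the induction.

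The main obstacle I anticipate is the careful treatment of ties in probability, both in justifying that the minimal sibling pair consists of leaves and in verifying that appending or truncating the witnessing list respects the non-increasing ordering; assuming strictly positive source probabilities (or breaking ties consistently with the Huffman merge order) keeps these steps clean, and the base cases $n = 1, 2$ are immediate.
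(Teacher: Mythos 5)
The paper offers no proof of this lemma for you to be compared against: it is imported as a known external result (Gallager's Theorem~1), stated with a citation only, and then used as a black box in the proofs of Theorem~\ref{thm:strongly-monotone}, Lemma~\ref{lem:Huffman-same-prob-swap}, and Theorem~\ref{thm:swap-equivalent-Huffman}. Judged on its own, your induction is correct, and it is in substance the classical argument from Gallager's paper: in the forward direction you reduce along the first Huffman merge and append the two merged leaves to the witnessing list of the reduced tree, and in the converse direction you use the structural fact that the last two entries of any witnessing list are minimal-probability sibling leaves, observe that merging them is a legitimate first Huffman step, and induct on the reduced tree, whose truncated list still witnesses the sibling property. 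Two points deserve the care you already partly flag. First, your structural fact genuinely requires strictly positive probabilities: with zero-probability symbols an internal node can tie with its own children, and a valid witnessing list can then end with an internal node (e.g., probabilities $1,0,0,0$ with the zero-probability leaves nested), so the claim ``the last two entries are leaves'' fails; this restriction is consistent with the paper, which implicitly assumes positivity (see the step ``Since $P(u)>0$'' in the proof of Theorem~\ref{thm:strongly-monotone}). Second, the forward direction hinges on the chain that every leaf of the reduced tree has probability at least $\max\{P(a),P(b)\}$, that every node dominates some leaf below it, and hence that appending $a,b$ keeps the list non-increasing; you state all three links, so the induction closes properly.
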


% ---------------------------------------------------------------------------

\section{Characterization of optimal prefix codes}
\label{sec:Main-Result}

In this section
we prove Theorem~\ref{thm:strongly-monotone},
which gives a new characterization of optimal prefix codes
for a given source.

While all Huffman codes are optimal and were
characterized by Gallager in terms of the sibling property,
not all optimal codes are Huffman codes.
Theorem~\ref{thm:strongly-monotone}
shows that a prefix code is optimal
if and only if it is complete and strongly monotone.
The combination of completeness
and strong monotonicity is weaker than the sibling property,
and thus 
a broader class of prefix codes 
(namely, the optimal ones)
satisfies this combination.

\begin{definition}
Given a source with alphabet $\Alphabet$,
a prefix code $C$ is \textit{strongly monotone}
if for every $A, B \subseteq S$
with $K_C(A) = 2^{-i} > 2^{-j} = K_C(B)$
for some integers $i$ and $j$,
we have
$P(A) \ge P(B)$.
\end{definition}

The strongly monotone property reduces to Gallager's monotone property
when the set $A$ consists of all leaf descendants of a single tree node,
and $B$ consists of all leaf descendants of a different tree node.
Example~\ref{ex:strong-monotone} 
illustrates that these two properties are not equivalent.
Specifically, the example shows that 
prefix code $C$ is not strongly monotone
because $K_C(\{c,d\}) = 2^{-1} > 2^{-2} = K_C(\{a\})$
but $P(\{c,d\}) = \frac{1}{4} < \frac{3}{8} = P(\{a\})$.

Monotonicity and completeness do not imply
strong monotonicity, nor do they imply optimality,
as illustrated in the following simple example.
\begin{example}[complete \& monotone $\notimplies$ optimal]\ 
See Figure~\ref{fig:monotone-but-not-strongly-monotone}.\\
A balanced prefix code $C$ 
for a source with symbols $a$,$b$,$c$,$d$,
and probabilities 
$\frac{3}{8},\frac{3}{8},\frac{1}{8},\frac{1}{8}$, respectively,
is complete, monotone,
and has average length $2$.
But $C$ is not optimal since a Huffman code has codewords of lengths 
$1,2,3,3$ and smaller average length $15/8$.

% --------------------
\tikzset{every label/.style={xshift=0ex, text width=6ex, align=center, yshift=-6ex,
                             inner sep=1pt, font=\footnotesize, text=red}}
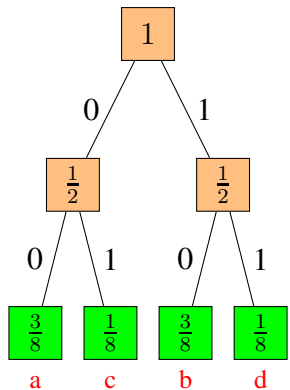
\begin{figure}[H]
\begin{center}
\begin{forest}
for tree={where n children={0}{ev}{iv},l+=8mm,
if n=1{edge label={node [midway, left] {0} } }{edge label={node [midway, right] {1} } },}
      [$1$, baseline
         [$\frac{1}{2}$,
           [$\frac{3}{8}$, label=a]
           [$\frac{1}{8}$, label=c]]
         [$\frac{1}{2}$,
           [$\frac{3}{8}$, label=b]
           [$\frac{1}{8}$, label=d]]]
\end{forest}
\end{center}
\caption{
A code tree illustrating monotonicity without strong monotonicity.
}
\label{fig:monotone-but-not-strongly-monotone}
\end{figure}
% --------------------
\label{ex:strong-monotone}
\end{example}

Strong monotonicity of a prefix code does not imply the code is complete,
and hence the code may not be optimal,
as illustrated in the following simple example.

\begin{example}[strongly monotone $\notimplies$ optimal]\ 
See Figure~\ref{fig:strongly-monotone-not-complete}.\\

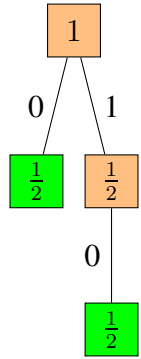
\begin{figure}[H] % Need to use \usepackage{float} for "H" to force "here", unlike "h".
\begin{center}
\begin{forest}
for tree={where n children={0}{ev}{iv},l+=8mm,
if n=1{edge label={node [midway, left] {0} } }{edge label={node [midway, right] {1} } },}
      [$1$, baseline
         [$\frac{1}{2}$]
         [$\frac{1}{2}$,
           [$\frac{1}{2}$]]]
\end{forest}
\end{center}
\caption{
An non-complete prefix code tree
illustrating strong monotonicity without optimality.
}
\label{fig:strongly-monotone-not-complete}
\end{figure}
% --------------------
\label{ex:not-complete}
\end{example}

The following lemma easily follows from the proof of 
Lemma~\ref{lem:Kraft-Inequality}.
This lemma relies on our defining assumption that sources (and thus codes) are finite.
Prefix codes for infinite sources need not satisfy the lemma below.

\begin{lemma}
A prefix code is complete
if and only if
for every node $u$ in its code tree,
the Kraft sum of the set of leaf descendants of $u$ equals $2^{-i}$
where $i$ is the length of $u$.
Also,
a prefix code is complete if and only if its Kraft sum equals $1$.
\label{lem:complete_KS_1}
\end{lemma}

The following lemma lists properties that do not change 
among length equivalent prefix codes.

\begin{lemma}
If two prefix codes are length equivalent,
then each of the following properties
holds for one code
if and only if
it holds for the other code:
(1) completeness;
(2) strong monotonicity;
(3) optimality.
\label{lem:length_equivalent_properties}
\end{lemma}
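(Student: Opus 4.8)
The plan is to reduce all three equivalences to a single observation: each quantity appearing in the three properties depends on the two codes only through their codeword lengths. First I would record that if $C_1$ and $C_2$ are length equivalent, so that $l_{C_1}(y) = l_{C_2}(y)$ for every $y \in \Alphabet$, then for every subset $U \subseteq \Alphabet$ the Kraft sums agree,
\[
K_{C_1}(U) = \sum_{x \in U} 2^{-l_{C_1}(x)} = \sum_{x \in U} 2^{-l_{C_2}(x)} = K_{C_2}(U),
\]
and likewise the average lengths agree because $\sum_{y \in \Alphabet} l_{C_1}(y) P(y) = \sum_{y \in \Alphabet} l_{C_2}(y) P(y)$. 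This single identity carries the entire argument.

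For (1), I would apply the identity above with $U = \Alphabet$. By Lemma~\ref{lem:complete_KS_1} a prefix code is complete if and only if its total Kraft sum equals $1$, and since $K_{C_1}(\Alphabet) = K_{C_2}(\Alphabet)$, one code's Kraft sum equals $1$ if and only if the other's does. Hence $C_1$ is complete if and only if $C_2$ is complete.

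For (2), I would note that the defining inequality of strong monotonicity involves only the probabilities $P(A)$ and $P(B)$, which are determined by the source alone and not by any code, together with the Kraft sums $K_C(A)$ and $K_C(B)$. Because $K_{C_1}$ and $K_{C_2}$ coincide on every subset of $\Alphabet$, the hypothesis ``$K_C(A) = 2^{-i} > 2^{-j} = K_C(B)$ for some integers $i,j$'' holds for $C_1$ on a pair $(A,B)$ exactly when it holds for $C_2$; since the conclusion $P(A) \ge P(B)$ makes no reference to the code, $C_1$ is strongly monotone if and only if $C_2$ is.

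For (3), the average lengths of $C_1$ and $C_2$ are equal by the identity above, and optimality is defined by comparison against the fixed collection of all prefix codes for the source. Thus any third prefix code achieves a smaller average length than $C_1$ if and only if it achieves a smaller average length than $C_2$, so $C_1$ is optimal if and only if $C_2$ is optimal. I do not anticipate a genuine obstacle here: the content of the lemma is precisely the recognition that completeness, strong monotonicity, and optimality are all functions of the codeword-length profile, so the only care needed is to confirm, property by property, that no code-dependent quantity other than the lengths enters.
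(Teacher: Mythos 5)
Your proposal is correct and follows essentially the same route as the paper's proof: both reduce all three properties to the single observation that Kraft sums (hence completeness via Lemma~\ref{lem:complete_KS_1} and the strong monotonicity hypothesis) and expected lengths depend only on the codeword-length profile. No gaps; the argument is sound as written.
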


\begin{proof}
Let $\Alphabet$ be the source alphabet.
Let $C$ and $C'$ be
length equivalent prefix codes,
i.e., $l_C(y) = l_{C'}(y)$ for all $y \in \Alphabet$.
Then for all $y \in \Alphabet$,
\begin{align*}
K_{C}(\{y\})
&= 2^{-l_C(y)}
= 2^{-l_{C'}(y)}
= K_{C'}(\{y\}).
\end{align*}

Since
\begin{align*}
\sum_{y \in \Alphabet} K_C(\{y\})
&= \sum_{y \in \Alphabet} K_{C'}(\{y\}),
\end{align*}
$K_C(S)=1$ if and only if $K_{C'}(S)=1$,
so Lemma~\ref{lem:complete_KS_1} 
implies that
$C$ is complete if and only if $C'$ is complete.

Suppose $C$ is strongly monotone.
Let $A, B \subseteq \Alphabet$
with $K_{C'}(A) = 2^{-i}$ and $K_{C'}(B) = 2^{-j}$
for some integers $i,j$ such that $0 \le i < j$.
Since $K_C(A) = K_{C'}(A) = 2^{-i}$
and $K_C(B) = K_{C'}(B) = 2^{-j}$,
we have $P(A) \ge P(B)$
since $C$ is strongly monotone.
Thus $C'$ is also strongly monotone.

Let $X$ be a source random variable.
The average length of code $C$ is
\begin{align*}
E[l_C(X)]
&= \sum_{y \in \Alphabet} P(y) l_C(y)
= \sum_{y \in \Alphabet} P(y) l_{C'}(y)
= E[l_{C'}(X)],
\end{align*}
so
$C$ is optimal
if and only if $C'$ is optimal.
\end{proof}

The following proves our main result.

\begin{proof}[Proof of Theorem~\ref{thm:strongly-monotone}]
Let $S$ be the source alphabet,
and let $X$ be the source random variable on $S$.
Define $P(u) = P(X = u)$ for all $u \in S$.

First, suppose $C$ is an optimal prefix code for $X$.
Then $C$ is complete by Lemma~\ref{lem:minlen-implies-monotone}.

Suppose for contradiction that $C$ is not strongly monotone.
Then there exist subsets $A,B \subseteq S$
such that $K_C(A) = 2^{-i} > 2^{-j} = K_C(B)$
for some integers $i$ and $j$,
but $P(A) < P(B)$.
Define a new prefix code $C'$ such that for all $u \in S$,
\begin{align*}
l_{C'}(u)
&= l_C(u) +
\begin{cases}
j-i & \text{if}\ u \in A - B \\
i-j & \text{if}\ u \in B - A \\
0 & \text{otherwise}
\end{cases}
\end{align*}
Note that such a prefix code $C'$ exists by Lemma~\ref{lem:Kraft-Inequality},
since $K_C(S) \le 1$ and
\begin{align*}
K_{C'}(S) - K_C(S)
&= \sum_{u\in A-B} \left( 2^{-l_{C'(u)}} - 2^{-l_{C(u)}} \right)
 + \sum_{u\in B-A} \left( 2^{-l_{C'(u)}} - 2^{-l_{C(u)}} \right)\\
&= \sum_{u\in A-B} 2^{-l_{C(u)}} \left( 2^{l_{C(u)}-l_{C'(u)}} - 1 \right)
 + \sum_{u\in B-A} 2^{-l_{C(u)}} \left( 2^{l_{C(u)}-l_{C'(u)}} - 1 \right)\\
&= (2^{-(j-i)} - 1) K_C(A-B) + (2^{-(i-j)} - 1) K_C(B-A) \\
&< (2^{-(j-i)} - 1) K_C(A-B) + (2^{-(i-j)} - 1) K_C(B-A) \\
&\ \ \ \ \ \ \ \ + (2^{-(j-i)} + 2^{-(i-j)} - 2) K_C(A \cap B) \\
&= (2^{-(j-i)} - 1) K_C(A) + (2^{-(i-j)} - 1) K_C(B) \\
&= (2^{-(j-i)} - 1) 2^{-i} + (2^{-(i-j)} - 1) 2^{-j} \\
&= 0 \end{align*}
where the inequality above
follows since $i$ and $j$ are nonnegative integers satisfying $j>i$,
which implies $2^{-(i-j)} \ge 2$ and $2^{-(j-i)} > 0$.
But then
\begin{align*}
E[l_{C'}(X)] - E[l_C(X)]
&= \sum_{u\in A-B} P(u)(l_{C'(u)} - l_{C(u)}) + \sum_{u\in B-A} P(u)(l_{C'(u)} - l_{C(u)})\\
&= (j-i) P(A-B) + (i-j) P(B - A)\\
&= (j-i) (P(A) - P(B)) \\
&< 0,
\end{align*}
which contradicts the optimality of $C$.
Thus $C$ is strongly monotone.

Now suppose $C$ is complete and strongly monotone,
and let $T$ be the code tree for $C$.
The completeness of $T$ implies that
every row of $T$ below the root 
has an even number of nodes in it since each node has a sibling.
The following iterative procedure
constructs a code tree $T'$
that is length equivalent to $T$
and whose node probabilities
are non-increasing from left to right in each row.
Begin by listing the leaves on the bottom row of $T$
in order of non-increasing probability
and combining them as siblings in pairs;
this is possible since there are an even number of such leaves on the row.
Then list the parent nodes just created
and the leaves in the second-lowest row of $T$
in order of non-increasing probability
and combine them as siblings in pairs;
again, this is possible for the same reason as in the previous step.
Continue this procedure from the bottom row to the top row,
until $T'$ is constructed.
The construction of $T'$ preserves which row its leaves came from in $T$,
so $T'$ is length equivalent to $T$.

Let $u$ be a node in row $i$ of $T'$
and let $v$ be a node in row $j$ of $T'$,
where $j > i$.
Let $C'$ be the prefix code whose code tree is $T'$,
and let $U,V \subseteq S$ be the sets of source symbols
corresponding to the leaf descendants of $u$ and $v$, respectively.
Since $C$ is complete,
Lemma~\ref{lem:length_equivalent_properties} implies $C'$ is complete.
Lemma~\ref{lem:complete_KS_1}
therefore implies $K_{C'}(U) = 2^{-i}$ and $K_{C'}(V) = 2^{-j}$,
and since $C'$ is length equivalent to $C$
we have
\begin{align*}
K_C(U) = K_{C'}(U) = 2^{-i} > 2^{-j} = K_{C'}(V) = K_C(V).
\end{align*}
Then since $C$ is strongly monotone,
$P(u) = P(U) \ge P(V) = P(v)$.
Therefore,
the list of nodes of $T'$ in raster-scan order,
beginning at the root node and moving down row-by-row,
left-to-right in each row,
has each node appearing adjacent to its sibling
and the node probabilities are non-increasing.
Since $C'$ is also complete,
$C'$ satisfies the sibling property, and so
Lemma~\ref{lem:SiblingProperty-iff-Huffman} implies that $C'$ is a Huffman code,
and thus $C'$ is optimal.
Since $C$ is length equivalent to $C'$,
Lemma~\ref{lem:length_equivalent_properties} implies
$C$ is optimal.
\end{proof}

% ---------------------------------------------------------------------------

% \clearpage


\begin{thebibliography}{100}
%
\label{references}
\renewcommand{\baselinestretch}{0.9}
\setstretch{0.9}

\bibitem{Cao-Yao-Chen-2007}
L. Cao, L. Yao, and C. W. Chen, 
``MAP decoding of variable length codes with self-synchronization strings'',
\textit{IEEE Transactions on Signal Processing}, 
vol. 55, no. 8, pp. 4325 -- 4330, September 2007.

\bibitem{Congero-Zeger-Archiv-competitive}
S. Congero and K. Zeger,
``Competitive advantage of Huffman and Shannon-Fano codes'',
\textit{IEEE Transactions on Information Theory}
(revised April 9, 2024),
Also available on: arXiv:2311.07009 [cs.IT].

\bibitem{Cover-Thomas-book-2006}
T. M. Cover and J. A. Thomas,
\textit{Elements of Information Theory},
2nd edition, New Jersey,
Wiley-Interscience, 2006.

\bibitem{Escott-Perkins-1998}
A. E. Escott and S. Perkins, 
``Binary Huffman equivalent codes with a short synchronizing codeword'',
\textit{IEEE Transactions on Information Theory},
vol. 44, no. 1, pp. 346 -- 351, January 1998.

\bibitem{Ferguson-Rabinowitz-1984}
T. Ferguson and J. Rabinowitz, 
``Self-synchronizing Huffman codes'',
\textit{IEEE Transactions on Information Theory},
vol. 30, no. 4, pp. 687 --6 93, July 1984.

\bibitem{Freiling-Jungreis-Theberge-Zeger}
C. Freiling, D. Jungreis, F. Th\'{e}berge, and K. Zeger,
``Almost all complete binary prefix codes have a self-synchronizing string'',
\textit{IEEE Transactions on Information Theory},
vol. 49, no. 9, pp. 2219 -- 2225, September 2003.

\bibitem{Gallager-IT-1978} 
R. G. Gallager,
``Variations on a theme by Huffman",
\textit{IEEE Transactions on Information Theory},
vol. 24, no. 6, pp. 668 -- 8674, November 1978.

\bibitem{Higgs-Perkins-Smith-2009}
M. B. J. Higgs, S. Perkins, and D. H. Smith, 
``The construction of variable length codes with 
good synchronization properties'',
\textit{IEEE Transactions on Information Theory},
vol. 55, no. 4, pp. 1696 -- 1700, April 2009.

\bibitem{Huffman-1952}
D. A. Huffman, 
``A method for the construction of minimum-redundancy codes'',
\textit{Proceedings of the IRE}, 
vol. 40, no. 9, pp. 1098 -- 1101, September 1952.

\bibitem{Kraft-1949}
L. G. Kraft,
``A device for quantizing, grouping, and coding amplitude modulated pulses'',
\textit{MS Thesis},
Cambridge, MA: 
Massachusetts Institute of Technology Electrical Engineering Department,
1949.

\bibitem{Longo-Galasso-1982}
G. Longo and G. Galasso,
``An application of informational divergence to Huffman codes'',
\textit{IEEE Transactions on Information Theory},
vol. 28, no. 1, pp. 36 -- 43, January 1982,

\bibitem{Manickman-2019}
S. K. Manickam,
``Probability mass functions for which sources have
the maximum minimum expected length'',
\textit{National Conference on Communications (NCC)},
Bangalore, India, pp. 1 -- 6, February 20-23, 2019.

\bibitem{Zhou-Zhang-2002}
G. Zhou and Z. Zhang, 
``Synchronization recovery of variable-length codes'',
\textit{IEEE Transactions on Information Theory},
vol. 48, no. 1, pp. 219 -- 227, February 2002.

\bibitem{Zhou-Au-2010}
J. Zhou and O. C. Au, 
``Error recovery of variable length code over BSC with arbitrary crossover probability'',
\textit{IEEE Transactions on Communications}, 
vol. 58, no. 6, pp. 1654 -- 1666, June 2010.

\end{thebibliography}
\end{document}